\newcommand{\ket}[1]{| #1 \rangle}
\newcommand{\bra}[1]{\langle #1 |}
\newtheorem{theorem}{Theorem}
\newtheorem{lemma}{Lemma}
\newtheorem{corollary}{Corollary}
\begin{document}

\title{Entanglement classes of permutation-symmetric qudit states:
  symmetric operations suffice}

\author{Piotr Migda{\l}}
\affiliation{ICFO--Institut de Ci\`{e}ncies Fot\`{o}niques, 08860
  Castelldefels (Barcelona), Spain}
\email{piotr.migdal@icfo.es}
\email{pmigdal@gmail.com}
\homepage{http://migdal.wikidot.com/en}

\author{Javier Rodriguez-Laguna}
\affiliation{ICFO--Institut de Ci\`{e}ncies Fot\`{o}niques, 08860
  Castelldefels (Barcelona), Spain}
\affiliation{Mathematics Department, Universidad Carlos III de Madrid,
  Spain}

\author{Maciej Lewenstein}
\affiliation{ICFO--Institut de Ci\`{e}ncies Fot\`{o}niques, 08860
  Castelldefels (Barcelona), Spain}
\affiliation{ICREA--Instituci\'o Catalana de Recerca i Estudis
  Avan\c{c}ats, Lluis Companys 23, 08010 Barcelona, Spain}

\date{August 16, 2013}

\begin{abstract}
We analyze entanglement classes for permutation-symmetric states for
$n$ qudits (i.e., $d$-level systems), with respect to local unitary
operations (LU equivalence) and stochastic local operations and
classical communication (SLOCC equivalence). In both cases, we show
that the search can be restricted to operations where the same local
operation acts on all qudits, and we provide an explicit construction
for it.  Stabilizers of states in the form of one-particle operations
preserving permutation symmetry are shown to provide a coarse-grained
classification of entanglement classes. We prove that the Jordan form
of such one-particle operators is a SLOCC invariant. We find, as
representatives of those classes, a discrete set of entangled states
that generalize the GHZ and W states for the many-particle qudit
case. In the later case, we introduce {\em excitation states} as a
natural generalization of the W state for $d>2$.

\end{abstract}

\maketitle

\section{Introduction}

\subsection{Entanglement of multipartite pure states}

Entanglement is perhaps the most important resource for quantum
information (for a review see \cite{Horodecki2009}), and its
characterization is one of the most important tasks of quantum
theory. Particularly difficult is the problem of characterization of
entangled mixed states (for a recent review of various necessary
criteria see \cite{Guehne2009}). It might seem that the problem of
characterization of pure state entanglement is much simpler and
more tractable, but even this statement is, generally speaking, not true,
except for the case of bipartite entanglement, where the Schmidt
decomposition provides a method of classification of pure entangled
states of two parties \cite{Horodecki2009}. In a multipartite scenario
very little is known about the different classes of
entanglement. Typical questions that one would like to answer concern
entanglement classes of pure states which are invariant with respect
to local operations. The latter are typically assumed to belong to a
group (unitary, general linear, etc.). The corresponding classes of
states are called then LU equivalent, SLOCC equivalent, etc., where LU denotes
local unitary, and SLOCC --- stochastic local operations and classical
communications. Only a few rigorous results are known concerning these
questions, which we list below

\begin{itemize}
\item For three qubits a generalization of the Schmidt decomposition
  has been formulated (see \cite{acin2000generalized} and references
  therein) --- this result provides a classification of invariant
  states with respect to local unitaries. There is a considerable
  amount of work regarding this and the related problem of geometrical
  invariants by Sudbery and coworkers \cite{Carteret2000, Williamson2011}. 
\item Classification of entanglement of three qubit states according
  to LU operations and SLOCC has been presented in
  Ref. \cite{Sudbery2000} and \cite{Dur2000}, respectively.  
\item Classification of entanglement of 4 qubits according to SLOCC
  has been presented in Ref. \cite{Verstraete2002} (see also the
  papers by Miyake and Wadate \cite{Miyake2002} and by Miyake
  \cite{Miyake2003}). 
\item For many qudits a multiparticle generalization of the Schmidt
  decomposition \cite{Carteret2000, Verstraete2003} provides a general
  way to determine whether two states are LU equivalent.
\item SLOCC equivalence can be reduced to LU equivalence of the
  critical points of the so-called {\em total variance function}
  \cite{Sawicki.12}. 
\end{itemize}

There is also a considerable amount of work on many qubits states
(see \cite{Miyake2004, Miyake2004a,Kraus2010a, Kraus2010b}), but very
little is known about general many qudit states. The difficulty of
classifying entanglement for multipartite pure states was evidently
one of the motivations to look at restricted families of states. Such
restrictions are typically introduced by considering symmetries
\cite{Sawicki2012}, which might be physically motivated. In this
spirit many authors considered totally (permutation) symmetric pure
states of $n$ qubits
(see \cite{Aulbach2010,Devi2010,Mathonet2010,Aulbach2011,Cenci2010a,
  Ganczarek2011}), since such states naturally describe systems of
many bosons, and appear frequently in the context of quantum
optics. Similarly, quantum correlations in totally antisymmetric
states (as representative states of fermions) have been intensively
investigated (for a review see \cite{Eckert2002} and references
therein). In the next introductory section we focus on symmetric
states and their particular role in physical applications.

\subsection{Permutationally symmetric pure states}

A many-qudit wavefunction can be permutation-symmetric for two
reasons. One is when it describes a system of bosons, so that the
particles are indistinguishable on a fundamental level. Second is when
the particles are distinguishable but, because of a particular setting
(e.g. a Hamiltonian for which the particles form an eigenstate), they
happen to be in a permutation-symmetric state. The latter situation
occurs for instance for the Lipkin-Meshkov-Glick model
\cite{Lipkin1965} of nuclear shell structure, and related models of
quantum chaos \cite{Gnutzmann2000}. It is worth stressing that the two
situations are {\em not the same}. In the later case we are able to
manipulate each particle separately in a different way, while in the
first we are restricted to operations modifying each boson in the same
way. The question is whether those two settings give rise to the same
entanglement classes, i.e. if for symmetric states classification can
be reduced to studying operations that act in the same way on all
particles.  Moreover, entanglement geometry of permutation-symmetric
states is interesting and relevant, e.g. for quantum computation using
linear optics \cite{Aaronson2010}.  As mentioned above, this question
has been widely studied in the qubit case \cite{Aulbach2010a, Devi2010,
  Ganczarek2011}, but most of the results are not-applicable
for qudit systems of dimension $d>2$, a general case which we are
going to address.

In this paper we consider two types of equivalence ---under local
unitary operations (LU equivalence) \cite{Kraus2010a, Kraus2010b} and under
positive-operator-valued measures with post-selection. The
second is called stochastic local operations and classical
communication and is equivalent to multiplication by
invertible matrices (not necessarily unitary or even normal or
diagonalizable) on each particle.

For example, for the simplest case of two qubits, the LU-equivalence
classes are distinguished by their Schmidt coefficients, where a
unique representative can be written as $\lambda \ket{00} +
\sqrt{1-\lambda^2} \ket{11}$, with $0 \leq \lambda \leq 1/2$. On the
other hand, for SLOCC-equivalence there are only two distinct
symmetric states --- a product state $\ket{00}$ and the
Greenberger-Horne-Zeilinger (GHZ) state
$(\ket{00} + \ket{11})/\sqrt{2}$. In general, entanglement classes of
more than two particles are much more involved, even for the symmetric
qubit ($d=2$) states with three \cite{Dur2000} or four
\cite{Verstraete2002} particles.

In this paper we present two results. The first one is that, when
testing whether two permutation-symmetric $n$ qudit states are
equivalent under local transformations, the search can be restricted
to operators which act in the same way on every particle. This
property was proven for qubits \cite{Bastin2009, Mathonet2010,
  Bastin2010} in the SLOCC variant (although the unitary version can be
deduced from that proof).  For a general qudit system has remained so
far an open question \cite[Sec. 5.1.1.]{Aulbach2011}.  That is, in the
course of this paper, we prove the following:

\begin{theorem}\label{thm:main}
Let us consider two permutation-symmetric states of $n$ qudits
(i.e. $d$-level particles), $\ket{\psi}$ and $\ket{\varphi}$, for
which there exist invertible $d\times d$ matrices $A_1, \ldots, A_n$
such that
\begin{equation}
A_1 \otimes A_2 \otimes \ldots \otimes A_n \ket{\psi} =
\ket{\varphi}\label{eq:a1a2an}.
\end{equation}
Our result implies that then there exists an invertible $d\times d$
matrix $A$ such that
\begin{equation}
A \otimes A \otimes \ldots \otimes A \ket{\psi} = \ket{\varphi}.\label{eq:aaa}
\end{equation}

If we restrict ourselves to unitary matrices $A_1, \ldots, A_n$, then $A$ is
unitary.
\end{theorem}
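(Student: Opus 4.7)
The plan is to exploit the permutation symmetry of $\ket{\psi}$ and $\ket{\varphi}$ to force the $n$ local operators to coincide up to scalars. The starting observation is that, since both states are symmetric, the operator $A_{\pi(1)} \otimes \cdots \otimes A_{\pi(n)}$ also sends $\ket{\psi}$ to $\ket{\varphi}$ for every permutation $\pi \in S_n$; composing two such transformations produces a stabilizer of $\ket{\psi}$, and for transpositions $\pi=(i,j)$ this yields two-body stabilizers of the form $(A_i A_j^{-1}) \otimes (A_j A_i^{-1})$ in slots $i,j$ tensored with identity elsewhere. I would then reduce to the case where the single-particle reduced density matrix $\rho_\psi^{(1)}$ has full rank $d$, by restricting each $A_i$ to the support $V$ of $\rho_\psi^{(1)}$ (which is mapped invertibly to the support $V'$ of $\rho_\varphi^{(1)}$, since these supports have equal dimension).

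The key algebraic input is the following lemma: if $A, B \colon \mathbb{C}^d \to \mathbb{C}^d$ are invertible and $A \otimes B$ maps $\mathrm{Sym}^2(\mathbb{C}^d)$ into itself, then $A = \lambda B$ for some nonzero scalar $\lambda$. Setting $u_i = A e_i$ and $v_i = B e_i$, the diagonal constraint that $u_i \otimes v_i$ be symmetric gives $u_i = \lambda_i v_i$; the off-diagonal constraint $(\lambda_i - \lambda_j)(v_i \otimes v_j - v_j \otimes v_i) = 0$ then forces all $\lambda_i$ to coincide, since the $v_i$ are linearly independent. To apply this to the main problem, fix any pair $(i,j)$ and decompose $\ket{\psi}$ across the bipartition $(i,j)\,|\,\mathrm{rest}$: by symmetry of $\ket{\psi}$ under the swap of $i$ and $j$, the Schmidt vectors on the $(i,j)$ side lie in $\mathrm{Sym}^2 V$, and by symmetry of $\ket{\varphi}$ they are sent by $A_i \otimes A_j$ into $\mathrm{Sym}^2 V'$.

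When the two-particle reduced density matrix of $\ket{\psi}$ has full rank on $\mathrm{Sym}^2 V$, the Schmidt vectors span the whole symmetric subspace and the lemma yields $A_i = c_i A$ for a common invertible $A$. Then $A_1 \otimes \cdots \otimes A_n = (c_1 \cdots c_n)\, A^{\otimes n}$, and rescaling $A$ by an $n$-th root of $c_1 \cdots c_n$ concludes the argument; this rescaling preserves unitarity when the $c_i$ are phases, so the unitary version of \eqref{eq:aaa} follows from the unitary version of \eqref{eq:a1a2an}.

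The main obstacle is the non-generic case in which the two-particle RDM does not span $\mathrm{Sym}^2 V$ --- the GHZ state is a canonical example, and there the $A_i$ genuinely need not be proportional (for instance, any $A_i = \mathrm{diag}(a_i,b_i)$ with $\prod_i a_i$ and $\prod_i b_i$ fixed achieves the same $\ket{\varphi}$). In such cases one has to build $A$ block by block along the common invariant decomposition singled out by the two-body stabilizers found in the first paragraph, and check that the required $n$-th roots can be chosen compatibly inside each block; for the unitary case this reduces to choosing unit-modulus roots, which is always possible.
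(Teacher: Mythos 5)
Your generic-case argument is essentially sound and takes a genuinely different route from the paper: you try to prove that the $A_i$ are actually proportional to one another, via the clean lemma that an invertible $A\otimes B$ preserving $\mathrm{Sym}^2(\mathbb{C}^d)$ forces $A=\lambda B$. When the two-particle reduced state has full rank on $\mathrm{Sym}^2 V$ this works and is arguably more transparent than the paper's construction. The paper never attempts to establish proportionality of the $A_i$ (indeed, as you note, it fails for GHZ-like states); instead it writes $A_1\otimes\cdots\otimes A_n = A_1^{\otimes n}\, B_{12\,(2)}\cdots B_{1n\,(n)}$ with $B_{1j}=A_1^{-1}A_j$, shows each $B_{1j\,(j)}\ket{\psi}$ is symmetric so the whole product collapses onto a single site as $(B_{12}\cdots B_{1n})_{(1)}$, and then extracts an $n$-th root $S$ of $B_{12}\cdots B_{1n}$ that is a \emph{polynomial} in that matrix (choosing a consistent branch of the root on each Jordan block and Hermite-interpolating). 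Because $S=f(X)$ with $X_{(1)}\ket{\psi}$ symmetric, $S_{(1)}\ket{\psi}$ is automatically symmetric and $S^{\otimes n}\ket{\psi}=X_{(1)}\ket{\psi}$, with no genericity assumption.

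The genuine gap is your last paragraph. The degenerate case is not a corner case to be patched: it contains exactly the states with nontrivial stabilizers (GHZ, W, Dicke and all the excitation states of Sec.~III), and for every one of them the two-particle RDM fails to span $\mathrm{Sym}^2 V$, so your lemma never applies there. Your proposed fix --- ``build $A$ block by block along the common invariant decomposition singled out by the two-body stabilizers and choose the $n$-th roots compatibly'' --- presupposes structure that does not exist in general: the matrices $B_{ij}=A_i^{-1}A_j$ for different pairs need not commute as matrices (only their actions \emph{on the state} commute, and only for $n\ge 3$), so there is no common invariant decomposition to induct over; the blocks that do appear are Jordan blocks, not eigenspaces, so ``choosing a root in each block'' requires matching derivatives of the root function, not just eigenvalues; and even granting a decomposition, you give no mechanism guaranteeing that the root you pick sends $\ket{\psi}$ to $\ket{\varphi}$ rather than to some other symmetric state (cf.\ the paper's example $S=\sigma_x$, $B=I$, $S\otimes S\ket{00}=\ket{11}\ne\ket{00}$: not every root of the right matrix works). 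The single idea that closes all of these holes at once --- take the root to be a polynomial in $B_{12}\cdots B_{1n}$, so that it inherits the symmetry-preserving property from Corollary~\ref{corol:analytic} --- is absent from your sketch, and without it or an equivalent substitute the proof does not go through.
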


For $A_i$ unitary, \eqref{eq:a1a2an} is a condition of the equivalence
of states under reversible local operations (or LU-equivalence), which
is proven to be the same as equivalence with respect to local
operations and classical communication \cite{Bennett2000, Vidal2000}.
Moreover, in both cases we provide a direct construction for $A$ as a
function of $A_1,\ldots, A_n$.

Our second result stems from the consideration of stabilizers of
states \cite{Cenci2010a} in the form of a matrix $B$ acting on one
particle, and its inverse $B^{-1}$ acting on another one. Only for
very specific states are there such $B$, that are non-trivial. We
show that the Jordan form of $B$, disregarding the values of the
eigenvalues, is an invariant for SLOCC-equivalence, and analyze it in
detail, providing a coarse-grained classification of the relevant
entanglement classes. If each block of the Jordan form of $B$ has a
distinct eigenvalue, then there is a unique stabilized state, up to
local operations. In particular, we find as entanglement class
representatives a $d$-level generalization of the $n$-particle GHZ
state

\begin{align}
\frac{\ket{0}^n + \cdots + \ket{d-1}^n}{\sqrt{d}}
\end{align}
and one possible generalization of the W state for $d>2$, i.e. a state
with all single particle state indices adding up to $d-1$, that is
\begin{align}
{\textstyle {n + d - 2 \choose d - 1}^{-1/2}} 
\sum_{i_1+\cdots+i_n = d-1} \ket{i_1} \ket{i_2}\cdots \ket{i_n},
\label{eq:excitationnormalized}
\end{align}
which we call an \textit{excitation state}.

For two particles both classes coincide, as e.g.
\begin{align}
\ket{00} + \ket{11} + \ket{22} \cong \ket{02} + \ket{11} + \ket{20}
\end{align}
Table \ref{tab:n3} summarizes the entanglement classes related to
Jordan blocks for the simplest non-trivial case, i.e. $n=3$ particles
(a general construction is in \eqref{eq:unique_general}).  We adopt a
special notation for the Jordan block structure.  Outer brackets
separate eigenspaces with different eigenvalues, while the inner
brackets separate different Jordan blocks of the same eigenvalue.
Each number is the dimension of a single Jordan block.  Ordering of
the terms does not matter, in either the inner or outer brackets.  For
example: $\{ \{ 2 \}\}$ is a matrix with only one Jordan block, $\{ \{
1, 1 \}\}$ is proportional to the identity matrix and $\{ \{ 1 \}, \{
1 \} \}$ is a matrix with two different eigenvalues, that is
($\lambda_1 \neq \lambda_2$):

\begin{align}\label{eq:block-notation}
\{ \{ 2 \}\} & \equiv
\left[
\begin{array}{cc}
\lambda_1 & 1\\
0 & \lambda_1
\end{array}
\right]\\
\{ \{ 1, 1 \}\} & \equiv
\left[
\begin{array}{cc}
\lambda_1 & 0\\
0 & \lambda_1
\end{array}
\right]\\
\{ \{ 1 \}, \{ 1 \} \} & \equiv
\left[
\begin{array}{cc}
\lambda_1 & 0\\
0 & \lambda_2
\end{array}
\right]
\end{align}
 
The number of different Jordan block structures for a given $d$ is
given by double partitions~\cite{oeisA001970}.

\begin{table}
\begin{tabular}{|l|l|l|}

\hline
$d$ &
Block structure &
A class representative\\

\hline
2 &
$\{ \{ 2 \}\}$ &
$\ket{001} + \ket{010} + \ket{001}$\\

\hline
 &
$\{ \{ 1, 1 \}\}$ &
(not unique) any state\\

\hline
 &
$\{ \{ 1 \}, \{ 1 \} \}$ &
$\ket{000} + \ket{111}$\\

\hline
3 &
$\{ \{ 3 \} \}$ &
$\ket{002} + \ket{020} + \ket{200}$ \\

 &
 &
$+ \ket{011} + \ket{101} + \ket{110}$\\

\hline
 &
$\{ \{ 2, 1 \} \}$ &
(not unique)\\

\hline
 &
$\{ \{ 1, 1, 1 \} \}$ &
(not unique) any state\\

\hline
 &
$\{ \{ 2 \}, \{ 1 \} \}$ &
$\ket{001} + \ket{010} + \ket{100} + \ket{222}$\\

\hline
 &
$\{ \{ 1, 1 \}, \{ 1 \} \}$ &
(not unique)\\

\hline
 &
$\{ \{ \{ 1 \}, \{ 1 \}, \{ 1 \} \} \}$ &
$\ket{000} + \ket{111} + \ket{222}$\\

\hline

\end{tabular}
\caption{A summary of entaglement classes related to Jordan blocks,
  for the case of three qubits ($d=2$) and quitrits ($d=3$). The
  notation is explained in the main text
  \eqref{eq:block-notation}. The general construction for the unique
  states is given in \eqref{eq:unique_general}.}
\label{tab:n3}
\end{table}

This paper is organized as follows: Sec. \ref{symmetry} proves that it
is sufficient to study invariance under symmetric
transformations. Section \ref{classes} discusses the entanglement
classes which can be obtained by studying stabilizer operators related
to one-particle transformations. Section \ref{s:conclusion} is devoted
to conclusions and further work.

\section{Symmetric operations suffice}
\label{symmetry}

We start with an approach similar to the one from \cite{Mathonet2010}.
Let us consider two permutation-symmetric states, $\ket{\psi}$ and
$\ket{\varphi}\in {\cal S}$, with ${\cal S}$ denoting the symmetric
subspace of the full Hilbert space.  If \eqref{eq:a1a2an} holds, then
any different permutation of $A_1, \cdots, A_n$ will also work. In
order to show this property explicitly, we may use $\ket{\psi} =
P_\sigma \ket{\psi}$ and $\ket{\varphi} = P_{\sigma^{-1}}
\ket{\varphi}$, where $P_\sigma$ is a permutation matrix for the
permutation of particles $\sigma$, i.e. 
$P_\sigma \ket{i_1i_2 \cdots i_n}=\ket{i_{\sigma(1)} i_{\sigma(2)}
\cdots i_{\sigma(n)}}$.

Since all $A_i$ are invertible, it means in particular that
\begin{equation}
\left(A_2^{-1} \otimes A_1^{-1} \otimes \cdots \otimes A_n^{-1} \right)
\left(A_1 \otimes A_2 \otimes \cdots \otimes A_n \right) \ket{\psi} = \ket{\psi}
\end{equation}
or, equivalently, 
\begin{equation}
 \left(B \otimes B^{-1} \otimes \mathbb{I} \otimes \cdots \otimes \mathbb{I}
\right) \ket{\psi} = \ket{\psi},\label{eq:bbinv}
\end{equation}
where $B=A_2^{-1} A_1$.

From now on, we will use a subscript in parentheses to indicate the
position of an operator in the tensor product, e.g.,
\begin{equation}
 B_{(2)} \equiv \mathbb{I}  \otimes B \otimes \mathbb{I} \otimes \cdots \otimes
\mathbb{I},
\end{equation}
where the total number of factors is $n$.

First, let us show that if an operation on one particle can be
reversed by applying the inverse operation on a {\em different}
particle, then that single-particle operation must preserve
permutation symmetry of the state.

\begin{lemma}\label{thm:bbinv2b}
For a symmetric $\ket{\psi} \in \mathcal{S}$ the equality
\eqref{eq:bbinv} 
\begin{equation}
B_{(1)} B_{(2)}^{-1} \ket{\psi} = \ket{\psi}
\end{equation}
holds if and only if 
\begin{equation}
B_{(1)} \ket{\psi} \in \mathcal{S}. \label{eq:b}
\end{equation}
\end{lemma}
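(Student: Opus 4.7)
The plan is to use the fact that $B_{(1)}$ and $B_{(2)}^{-1}$ commute (they act on different tensor factors), which lets me rewrite the hypothesis $B_{(1)} B_{(2)}^{-1} \ket{\psi} = \ket{\psi}$ in the symmetric form
\begin{equation}
B_{(1)} \ket{\psi} = B_{(2)} \ket{\psi},
\end{equation}
by multiplying both sides by $B_{(2)}$. This identity is really the heart of the lemma, and both directions follow from comparing it with the consequence of $\ket{\psi}$ being symmetric.

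For the $(\Leftarrow)$ direction, I would start by using the symmetry of $\ket{\psi}$: since $P_{(12)} \ket{\psi} = \ket{\psi}$ and $P_{(12)} B_{(1)} P_{(12)} = B_{(2)}$, I immediately get
\begin{equation}
B_{(2)} \ket{\psi} \;=\; P_{(12)} B_{(1)} P_{(12)} \ket{\psi} \;=\; P_{(12)} B_{(1)} \ket{\psi}.
\end{equation}
If $B_{(1)} \ket{\psi}\in\mathcal{S}$, then the right-hand side equals $B_{(1)} \ket{\psi}$, so $B_{(1)} \ket{\psi} = B_{(2)} \ket{\psi}$, and applying $B_{(2)}^{-1}$ on the left recovers \eqref{eq:bbinv}.

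For the $(\Rightarrow)$ direction, starting from $B_{(1)} \ket{\psi} = B_{(2)} \ket{\psi}$, the same computation as above shows
\begin{equation}
P_{(12)} B_{(1)} \ket{\psi} \;=\; B_{(2)} \ket{\psi} \;=\; B_{(1)} \ket{\psi},
\end{equation}
so $B_{(1)} \ket{\psi}$ is invariant under the transposition $(12)$. It remains to promote this to invariance under the whole symmetric group $S_n$. For this I would observe that $B_{(1)}$ acts trivially on particles $2,\ldots,n$, so for any permutation $\tau\in S_{n-1}$ acting only on those particles, $P_\tau$ commutes with $B_{(1)}$, and hence $P_\tau B_{(1)} \ket{\psi} = B_{(1)} P_\tau \ket{\psi} = B_{(1)} \ket{\psi}$ by symmetry of $\ket{\psi}$. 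Thus $B_{(1)} \ket{\psi}$ is fixed by both $(12)$ and by the stabilizer of particle $1$, and these together generate all of $S_n$, so $B_{(1)} \ket{\psi} \in \mathcal{S}$.

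The only nontrivial step is the last group-theoretic observation, but it is essentially the statement that $S_n$ is generated by $(12)$ together with $S_{n-1}$ acting on $\{2,\ldots,n\}$, since conjugating $(12)$ by any such permutation produces all transpositions $(1j)$, and these together with $(12)$ generate $S_n$. Everything else is just bookkeeping with the commutativity of operators on different tensor factors.
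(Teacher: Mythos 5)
Your proposal is correct and follows essentially the same route as the paper: the paper's proof is precisely the two-step equivalence $B_{(1)}\ket{\psi}\in\mathcal{S} \Leftrightarrow B_{(1)}\ket{\psi}=B_{(2)}\ket{\psi} \Leftrightarrow B_{(1)}B_{(2)}^{-1}\ket{\psi}=\ket{\psi}$, stated without elaboration. You have merely filled in the justification of the first equivalence (the conjugation identity $P_{(12)}B_{(1)}P_{(12)}=B_{(2)}$ and the observation that invariance under $(12)$ together with invariance under the stabilizer of particle $1$ generates invariance under all of $S_n$), which the paper leaves implicit.
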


\begin{proof}

\begin{align}
B_{(1)} \ket{\psi} \in \mathcal{S} \Leftrightarrow B_{(1)} \ket{\psi} =
B_{(2)} \ket{\psi}\\
\Leftrightarrow B_{(1)} B_{(2)}^{-1} \ket{\psi} = \ket{\psi}
\end{align}

\end{proof}

Now we will show that the action of the aforementioned single-particle
operation $B_{(1)}$ can be expressed as an operation acting in the
same way on every particle $S^{\otimes n}$. Intuitively, we must
search for an $n$-th root of $B$. But not all such $n$-th roots will
work, as the following example shows:
% ***** 

% Theorem \ref{thm:sssb} states that there exists such an operator $S$, and that
% it fulfills $S^n=B$, but {\em not all} $n$-th roots of $B$ will be
% appropriate. As a counterexample, 

Consider $S=\sigma_x$, which is a square root of $B=I$, acting on
$\ket{00}\in \cal{S}$. While $B_{(1)}\ket{00}\in \cal{S}$,
$S_{(1)}\ket{00}=\ket{10}\not\in \cal{S}$, and $S \otimes S
\ket{00}=\ket{11}$, which, despite being symmetric, is not the desired
state. Thus, the relevant question is: {\em which one is the
  appropriate $n$-th root?}

Before we can proceed, we need a few lemmas.

\begin{lemma}
\label{thm:symcom}
If $\ket{\psi} \in {\cal S} $, $X_{(1)}\ket{\psi}\in {\cal S}$ and
$Y_{(2)}\ket{\psi}\in{\cal S}$, then $X_{(1)}Y_{(2)}\ket{\psi}\in
{\cal S}$ $\Leftrightarrow$ the commutator acting on the state
vanishes $[X_{(1)}, Y_{(1)}]\ket{\psi}=0$. 
\end{lemma}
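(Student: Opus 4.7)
The plan is to use the fact that a vector belongs to $\mathcal{S}$ exactly when it is invariant under every transposition $P_{(a,b)}$, since transpositions generate the symmetric group $S_n$. From the hypotheses, and as already used in the proof of Lemma~\ref{thm:bbinv2b}, we also have the stronger identities $X_{(1)}\ket{\psi} = X_{(i)}\ket{\psi}$ and $Y_{(1)}\ket{\psi} = Y_{(j)}\ket{\psi}$ for every $i,j$; these will be the main tools. I will also use freely that $X_{(i)}$ and $Y_{(j)}$ commute whenever $i \neq j$, since they act on different tensor factors.

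First I would check that $X_{(1)} Y_{(2)} \ket{\psi}$ is automatically invariant under any transposition $(a,b)$ that does not involve both of the first two particles. When $\{a,b\}$ is disjoint from $\{1,2\}$, the permutation commutes with $X_{(1)} Y_{(2)}$ and fixes $\ket{\psi}$. When $(a,b) = (1,j)$ with $j \geq 3$, I would commute $X_{(1)}$ past $Y_{(2)}$, apply $X_{(1)}\ket{\psi} = X_{(j)}\ket{\psi}$, and commute back; the case $(2,j)$ with $j \geq 3$ is symmetric, using the corresponding identity for $Y$.

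This leaves only invariance under the swap $(1,2)$, i.e.\ the requirement $X_{(1)} Y_{(2)} \ket{\psi} = X_{(2)} Y_{(1)} \ket{\psi}$. Substituting $Y_{(2)}\ket{\psi} = Y_{(1)}\ket{\psi}$ on the left turns the left-hand side into $X_{(1)} Y_{(1)} \ket{\psi}$, and on the right one commutes $Y_{(1)}$ past $X_{(2)}$ and then uses $X_{(2)}\ket{\psi} = X_{(1)}\ket{\psi}$ to obtain $Y_{(1)} X_{(1)} \ket{\psi}$. The condition therefore collapses to $[X_{(1)}, Y_{(1)}]\ket{\psi} = 0$. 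Every manipulation is an equivalence, so both directions of the lemma are established simultaneously.

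The main potential pitfall is just the bookkeeping: one must confirm that the $(1,2)$ swap really is the only transposition that produces a non-trivial constraint, so that no further condition beyond the commutator equation is hidden in the full symmetry requirement. Since transpositions generate $S_n$ and the analysis above covers every transposition type, this is not a serious obstacle, but it is the only place where an argument is genuinely needed rather than a mechanical substitution.
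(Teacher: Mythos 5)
Your proof is correct and follows essentially the same route as the paper's: reduce everything to invariance under the transposition of particles $1$ and $2$, then use $Y_{(2)}\ket{\psi}=Y_{(1)}\ket{\psi}$ and $X_{(2)}\ket{\psi}=X_{(1)}\ket{\psi}$ to turn that single condition into $[X_{(1)},Y_{(1)}]\ket{\psi}=0$. The only difference is that you explicitly verify that all other transpositions impose no constraint, a bookkeeping step the paper leaves implicit but which is indeed needed for the ``$\Leftarrow$'' direction.
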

 
\begin{proof}
If the final state is symmetric, we may permute the first two
particles without altering the result:
\begin{align}
0 &= X_{(1)}\left(Y_{(2)}\ket{\psi}\right) - Y_{(1)}\left(X_{(2)}\ket{\psi}\right)\\
&= X_{(1)}\left(Y_{(1)}\ket{\psi}\right) -
Y_{(1)}\left(X_{(1)}\ket{\psi}\right)\\
&= [X_{(1)}, Y_{(1)}]\ket{\psi}.
\end{align}
\end{proof}

To see how commutativity is important, take as an example
\begin{equation}
X = 
\left[
\begin{matrix}
1 & 1\\
0 & 1
\end{matrix}
\right], \quad
Y = 
\left[
\begin{matrix}
0 & 1\\
1 & 0
\end{matrix}
\right].
\end{equation}
acting on $\ket{\psi}=(\ket{01}+\ket{10})/\sqrt{2}$ (i.e. $\ket{\psi}$,
$X_{(1)}\ket{\psi}$ and $Y_{(2)}\ket{\psi}$ are symmetric, but $X \otimes Y
\ket{\psi} = (\ket{00}+\ket{01}+\ket{11})/\sqrt{2}$ is not).

\begin{lemma}\label{thm:comm3}
Moreover, for $n\geq3$ the commutator acting on the state always
vanishes, i.e.  $[X_{(1)}, Y_{(1)}]\ket{\psi}=0$.
\end{lemma}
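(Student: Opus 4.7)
The plan is to exploit the following \emph{position-shifting} observation: if $\ket{\psi}\in\mathcal{S}$ and $Z_{(1)}\ket{\psi}\in\mathcal{S}$, then $Z_{(k)}\ket{\psi} = Z_{(1)}\ket{\psi}$ for every site $k$. Indeed $Z_{(1)}\ket{\psi} = P_{(1k)} Z_{(1)}\ket{\psi} = Z_{(k)} P_{(1k)}\ket{\psi} = Z_{(k)}\ket{\psi}$, where the first equality uses that $Z_{(1)}\ket{\psi}$ is symmetric (hence fixed by the transposition $P_{(1k)}$), the middle uses that $P_{(1k)}$ intertwines $Z_{(1)}$ and $Z_{(k)}$, and the last uses that $\ket{\psi}$ is symmetric. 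Noting that $Y_{(2)}\ket{\psi}\in\mathcal{S}$ implies $Y_{(1)}\ket{\psi}\in\mathcal{S}$ (the transposition $P_{(12)}$ sends one to the other and preserves $\mathcal{S}$), both $X$ and $Y$ satisfy the hypothesis of the observation.

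I would then compute $X_{(1)}Y_{(1)}\ket{\psi}$ via a chain of steps, each of which is either (a) a relocation of one single-particle factor from one site to another via the position-shifting identity applied to $\ket{\psi}$, or (b) a trivial commutation of two operators acting on disjoint sites. Using a third particle (which is where $n\geq 3$ enters), the chain reads
\begin{align*}
X_{(1)}Y_{(1)}\ket{\psi}
&= X_{(1)}Y_{(3)}\ket{\psi} = Y_{(3)}X_{(1)}\ket{\psi} = Y_{(3)}X_{(2)}\ket{\psi}\\
&= X_{(2)}Y_{(3)}\ket{\psi} = X_{(2)}Y_{(1)}\ket{\psi} = Y_{(1)}X_{(2)}\ket{\psi} = Y_{(1)}X_{(1)}\ket{\psi},
\end{align*}
and subtracting the endpoints gives $[X_{(1)},Y_{(1)}]\ket{\psi}=0$.

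The main obstacle --- and the reason $n\geq 3$ is needed --- is avoiding a situation in which we are forced to commute $X$ and $Y$ on the same particle, which is precisely the relation we are trying to establish. With only $n=2$ particles, shifting $Y$ off site $1$ sends it to site $2$; then bringing $X$ off site $1$ again lands on site $2$, forcing the very collision we cannot resolve. A third, spectator site breaks the deadlock: we can temporarily park one factor on site $3$ while relocating the other between sites $1$ and $2$. No other subtlety arises, since each relocation step is a direct instance of the position-shifting identity and each commutation step acts on disjoint indices.
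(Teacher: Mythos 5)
Your proof is correct and follows essentially the same route as the paper: both arguments use the observation that $Z_{(j)}\ket{\psi}$ is independent of the site $j$ whenever $Z_{(1)}\ket{\psi}\in\mathcal{S}$, together with trivial commutation of operators on disjoint sites, with the third particle serving as a spectator slot. You merely make the position-shifting identity explicit and write out a few more intermediate links in the chain than the paper does.
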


\begin{proof}
\begin{align}
&X_{(1)} Y_{(1)} \ket{\psi}
= X_{(1)} Y_{(3)} \ket{\psi}
= X_{(2)} Y_{(3)} \ket{\psi}\\
&= Y_{(3)} X_{(2)} \ket{\psi}
= Y_{(1)} X_{(1)} \ket{\psi}
\end{align}
\end{proof}

\begin{lemma}
If $X_{(1)}\ket{\psi}$ is symmetric, then $X^p_{(1)}\ket{\psi}$ is
symmetric for all natural $p$ (integer if $X$ is invertible).
\end{lemma}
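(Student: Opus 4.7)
The plan is to proceed by induction on $p$, the base case $p=1$ being the hypothesis. The key translation used throughout is the one already deployed in Lemma \ref{thm:bbinv2b}: for a symmetric $\ket{\psi}$, the statement $X_{(1)}\ket{\psi}\in\mathcal{S}$ is equivalent to the operator identity $X_{(1)}\ket{\psi}=X_{(j)}\ket{\psi}$ for each $j$. Indeed, applying the transposition $P_{(1j)}$ on the left of $X_{(1)}\ket{\psi}$ and using $P_{(1j)} X_{(1)} = X_{(j)} P_{(1j)}$ together with $P_{(1j)}\ket{\psi}=\ket{\psi}$ yields this equivalence without needing invertibility of $X$.

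For the inductive step, starting from $X_{(1)}\ket{\psi}=X_{(2)}\ket{\psi}$, I would act with $X_{(2)}$ on both sides to get $X_{(2)} X_{(1)}\ket{\psi} = X_{(2)}^2\ket{\psi}$; since $X_{(1)}$ and $X_{(2)}$ commute trivially (they act on different tensor factors), this rearranges to $X_{(1)} X_{(2)}\ket{\psi} = X_{(2)}^2\ket{\psi}$, and substituting $X_{(2)}\ket{\psi}=X_{(1)}\ket{\psi}$ on the left gives $X_{(1)}^2\ket{\psi}=X_{(2)}^2\ket{\psi}$. Iterating, one finds $X_{(1)}^p\ket{\psi}=X_{(2)}^p\ket{\psi}$ for every natural $p$. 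Applying the characterization from Lemma \ref{thm:bbinv2b} in the reverse direction, now to the operator $X^p$, concludes that $X_{(1)}^p\ket{\psi}\in\mathcal{S}$.

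For the extension to negative $p$ when $X$ is invertible, I would multiply both sides of $X_{(1)}\ket{\psi}=X_{(2)}\ket{\psi}$ by $X_{(1)}^{-1} X_{(2)}^{-1}$ to obtain $X_{(2)}^{-1}\ket{\psi}=X_{(1)}^{-1}\ket{\psi}$, i.e.\ $X_{(1)}^{-1}\ket{\psi}\in\mathcal{S}$. The positive-$p$ argument, applied now to $X^{-1}$, then yields $X_{(1)}^{-p}\ket{\psi}\in\mathcal{S}$ for every natural $p$, completing the integer case.

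The only point requiring care, and the main ``obstacle'' such as it is, is being scrupulous about the logical direction in the characterization: the symmetry of $\ket{\psi}$ alone together with the equality $X_{(1)}\ket{\psi}=X_{(2)}\ket{\psi}$ does not a priori give symmetry of $X_{(1)}\ket{\psi}$, but combined with the orbit equalities $X_{(1)}\ket{\psi}=X_{(j)}\ket{\psi}$ for all $j$ (obtained identically from $P_{(1j)}$) it does. Keeping track of this, so that the characterization is invoked on the same footing both when extracting and when re-obtaining membership in $\mathcal{S}$, is all the argument really needs.
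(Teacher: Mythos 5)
Your proof is correct and follows essentially the same route as the paper's: an induction on $p$ that exploits the commutativity of $X$ with its own powers, together with the characterization $X_{(1)}\ket{\psi}\in\mathcal{S}\Leftrightarrow X_{(1)}\ket{\psi}=X_{(2)}\ket{\psi}$ from Lemma~\ref{thm:bbinv2b}, and the same reduction of negative powers to the positive case via $X^{-1}$. The only difference is presentational: where the paper invokes Lemma~\ref{thm:symcom} to pass from $X^{p}_{(1)}\ket{\psi}\in\mathcal{S}$ to $X^{p+1}_{(1)}\ket{\psi}\in\mathcal{S}$, you inline the same commutator manipulation directly.
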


\begin{proof}
We use mathematical induction with respect to $p$, starting at
$p=0$. Since $X$ commutes with $X^p$ (even without the restriction to
a specific state), then using Lemma \ref{thm:symcom},
$X^p_{(1)}\ket{\psi}\in{\cal S}$ implies that
$X^{p+1}_{(1)}\ket{\psi}\in {\cal S}$. If $X$ is invertible, we may
use the same argument for $X$ and $X^{-p}$, respectively.
\end{proof}

\begin{corollary}
Moreover, we get
\begin{align}
X^p_{(1)}\ket{\psi} = X^{p_1}\otimes X^{p_2}\otimes \cdots \otimes X^{p_n}
\ket{\psi},
\end{align}
for any integers $p_i$ (can be negative if $X$ is invertible) that add up to
$p$.
\end{corollary}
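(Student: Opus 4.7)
The plan is to reduce the statement to a single ``slot migration'' identity and then iterate it. The previous lemma already supplies one crucial input: $X^{p}_{(1)}\ket{\psi}$ lies in $\mathcal{S}$ for every admissible integer $p$ (natural in general, all integers when $X$ is invertible). Combined with the symmetry of $\ket{\psi}$ itself, this upgrades to the migration identity $X^{p}_{(i)}\ket{\psi} = X^{p}_{(1)}\ket{\psi}$ for every particle index $i$: letting $\sigma=(1\,i)$ and using $P_\sigma\ket{\psi}=\ket{\psi}$, $P_\sigma X^{p}_{(1)}\ket{\psi}=X^{p}_{(1)}\ket{\psi}$, together with $P_\sigma X^{p}_{(1)} P_\sigma^{-1} = X^{p}_{(i)}$, one reads off the identity in a single line.

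With the migration identity in hand, I would prove the corollary by induction on the number of indices $k>1$ for which $p_k\neq 0$. The base case, where all the operator weight already sits at position $1$, is immediate. For the inductive step, pick one such $k$; because $X^{p_k}_{(k)}$ commutes with every $X^{p_j}_{(j)}$ for $j\neq k$, I can rearrange the product so that $X^{p_k}_{(k)}$ is the first operator acting on $\ket{\psi}$, substitute $X^{p_k}_{(k)}\ket{\psi}=X^{p_k}_{(1)}\ket{\psi}$ by the migration identity, and then commute $X^{p_k}_{(1)}$ through to the site-$1$ slot where it merges with $X^{p_1}$ into $X^{p_1+p_k}$. The result is a tensor product in which position $k$ is trivial and position $1$ carries exponent $p_1+p_k$, reducing the count of nontrivial off-$1$ indices by one; the inductive hypothesis then closes the argument, and conservation of the total exponent yields $X^{p}_{(1)}\ket{\psi}$.

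The main obstacle is bookkeeping rather than conceptual: one must check that the successive migration-and-absorption steps do not spoil commutations used at later stages, which is clear here because each $X^{p_j}_{(j)}$ acts on a single fixed tensor factor, and the commutations only ever involve operators on different sites. A secondary point worth flagging is the restriction on powers: the preceding lemma yields $X^{p}_{(1)}\ket{\psi}\in\mathcal{S}$ only for $p\in\mathbb{N}$ in general, so extending the corollary to negative $p_i$ genuinely requires the invertibility hypothesis, matching the parenthetical in the statement.
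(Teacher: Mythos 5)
Your argument is correct and is essentially the one the paper leaves implicit: the preceding lemma gives $X^{q}_{(1)}\ket{\psi}\in\mathcal{S}$ for all admissible $q$, permutation symmetry upgrades this to $X^{q}_{(i)}\ket{\psi}=X^{q}_{(1)}\ket{\psi}$, and the factors are then absorbed into site $1$ one at a time. The paper states the corollary without proof, and your write-up supplies exactly the intended details, including the correct observation that negative exponents require invertibility.
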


\begin{corollary}
\label{corol:analytic}
In particular, $f(X)_{(1)}\ket{\psi}\in {\cal S}$ for any analytic
function $f(z)$.
\end{corollary}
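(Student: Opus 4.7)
The plan is to reduce the analytic case to the polynomial case (already essentially handled by the preceding lemma and linearity) and then pass to the limit using the fact that $\mathcal{S}$ is a finite-dimensional, hence closed, subspace of the ambient Hilbert space.

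First I would handle polynomials. For any polynomial $f(z) = \sum_{k=0}^{N} c_k z^k$, one has
\begin{equation}
f(X)_{(1)}\ket{\psi} = \sum_{k=0}^{N} c_k\, X^k_{(1)}\ket{\psi}.
\end{equation}
The previous lemma guarantees that each summand $X^k_{(1)}\ket{\psi}$ lies in $\mathcal{S}$ (including $k=0$, trivially, and negative $k$ if $X$ is invertible). Since $\mathcal{S}$ is a linear subspace, the finite linear combination on the right belongs to $\mathcal{S}$ as well. Thus the corollary holds for all polynomials (and, using negative powers, for all Laurent polynomials when $X$ is invertible).

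Next I would promote this to analytic $f$ by a Taylor-series argument. Assuming $f$ is analytic on a disk containing the spectrum of $X$ (for entire $f$ this is automatic), write $f$ as the limit of its partial sums $f_N(z) = \sum_{k=0}^{N} c_k z^k$. Standard matrix calculus gives $f_N(X) \to f(X)$ in operator norm, hence $f_N(X)_{(1)}\ket{\psi} \to f(X)_{(1)}\ket{\psi}$ in norm. Each $f_N(X)_{(1)}\ket{\psi}\in\mathcal{S}$ by the polynomial case, and $\mathcal{S}$ is closed (being finite-dimensional), so the limit $f(X)_{(1)}\ket{\psi}$ lies in $\mathcal{S}$.

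The only real subtlety — rather than an obstacle — is making sure $f(X)$ is well-defined, i.e., that the Taylor expansion converges at $X$; but this is a standard fact about the holomorphic functional calculus in finite dimensions and involves no new ingredient beyond what is already in place. The entire argument is essentially a one-line consequence of the preceding lemma combined with linearity and continuity.
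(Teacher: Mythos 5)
Your argument is correct, but it takes a different route from the one the paper implicitly relies on. You handle polynomials by linearity of $\mathcal{S}$ (same as the paper would) and then pass to general analytic $f$ by truncating the Taylor series and using norm convergence $f_N(X)\to f(X)$ together with the closedness of the finite-dimensional subspace $\mathcal{S}$. The paper instead leans on the matrix-function fact it invokes explicitly later (in the proof of Theorem \ref{thm:sssb}, citing Higham): in finite dimensions $f(X)$ is \emph{completely determined} by the values $f(\lambda_i)$ and finitely many derivatives at the eigenvalues, so $f(X)=q(X)$ for a Hermite interpolating polynomial $q$, and the analytic case reduces outright to the polynomial case with no limit needed. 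The interpolation route is slightly more general: your Taylor-series argument, as you note, requires $f$ to be given by a single power series converging on a disk containing the spectrum of $X$, whereas the standard definition of $f(X)$ only requires $f$ to be analytic on a neighbourhood of each eigenvalue (possibly a disconnected set, as is actually exploited later for the $n$-th root, whose branches are chosen independently near each eigenvalue). Your appeal to the ``holomorphic functional calculus'' at the end would close that gap, but the cleanest closure is precisely the interpolating-polynomial reduction, which also makes the continuity and closedness ingredients unnecessary. Either way the statement is established; no error, just a mild loss of generality in the form you stated it.
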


\begin{theorem}
\label{thm:sssb}
For any $X$ and $\ket{\psi}\in {\cal S}$ it holds that if
\begin{equation}
X_{(1)}  \ket{\psi} = \ket{\phi} \in {\cal S}
\end{equation}
then there exists a single-particle operator $S$ such that $S^n=X$,
$S_{(1)}\ket{\psi}\in{\cal S}$ and
\begin{equation}
 \left(S \otimes S \otimes \cdots \otimes S \right)  \ket{\psi} =
\ket{\phi}.
\end{equation}
\end{theorem}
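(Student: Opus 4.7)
The plan is to pick $S$ to be an analytic function of $X$, so that Corollary~\ref{corol:analytic} gives $S_{(1)}\ket{\psi}\in\mathcal{S}$ for free. Once that holds, the corollary just after Lemma~\ref{thm:comm3}, applied with $S$ in place of $X$, $p=n$, and each $p_i=1$, yields
\[
 S^{\otimes n}\ket{\psi}=S^n_{(1)}\ket{\psi}=X_{(1)}\ket{\psi}=\ket{\phi},
\]
where the middle equality uses $S^n=X$. So the whole argument reduces to producing an $n$-th root of $X$ that is simultaneously an analytic (equivalently, polynomial) function of $X$.

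I would construct such an $S$ using the holomorphic functional calculus on the Jordan form $X = P J P^{-1}$ with $J=\bigoplus_k(\lambda_k I + N_k)$ and each $N_k$ nilpotent. Assuming $X$ invertible (which is the relevant case, since in Theorem~\ref{thm:main} one has $X = A_2^{-1} A_1$), every $\lambda_k\ne 0$ and the binomial series
\[
 S_k=\lambda_k^{1/n}\sum_{m\ge 0}\binom{1/n}{m}(N_k/\lambda_k)^m
\]
terminates (because $N_k$ is nilpotent) and defines an $n$-th root of the $k$-th block. Setting $S=P\bigl(\bigoplus_k S_k\bigr)P^{-1}$ gives $S^n=X$, and $S$ is a polynomial in $X$ by standard Lagrange--Hermite interpolation on $\mathrm{spec}(X)$. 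For the unitary variant of the main theorem, one instead writes $X=e^{iH}$ with $H$ Hermitian and takes $S=e^{iH/n}$; this is unitary, still a polynomial in $X$, and an $n$-th root of $X$.

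The principal obstacle is exactly this step: matrix $n$-th roots do not in general exist without some restriction---for instance a single $2\times 2$ nilpotent Jordan block has no square root at all---so one cannot manipulate the symbol $X^{1/n}$ purely formally. The functional calculus circumvents the issue cleanly whenever the spectrum of $X$ stays away from the branch cut of $z^{1/n}$, which is automatic in the invertible case required for Theorem~\ref{thm:main}. Once $S$ is in hand the remainder of the proof is short: analyticity in $X$ propagates symmetry from $\ket{\psi}$ to $S_{(1)}\ket{\psi}$, the earlier lemmas then redistribute the total power $n$ across all particles, and one lands on $S^{\otimes n}\ket{\psi}=X_{(1)}\ket{\psi}=\ket{\phi}$ as required.
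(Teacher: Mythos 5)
Your proposal is correct and follows essentially the same route as the paper: choose a branch of the $n$-th root on the spectrum of $X$, realize it as a polynomial in $X$ via Hermite interpolation so that Corollary~\ref{corol:analytic} gives $S_{(1)}\ket{\psi}\in\mathcal{S}$, and conclude $S^{\otimes n}\ket{\psi}=S^n_{(1)}\ket{\psi}=\ket{\phi}$. Your explicit terminating binomial series on each Jordan block is just a concrete realization of that same interpolation, and your remark that invertibility of $X$ is needed (a nilpotent Jordan block has no root) is a point the paper's proof leaves implicit but which holds in the relevant application where $X=B_{11}B_{12}\cdots B_{1n}$ is invertible.
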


\begin{proof}
The proof outline is the following: we prove that, among the $n$-th
roots of operator $X$, there is (at least) one, $S$, which {\em can be
  expressed as a polynomial} of $X$; following Corollary
\ref{corol:analytic}, we get $S_{(1)}\ket{\psi}\in\cal{S}$ and the
rest of the theorem follows.

The $n$-th root function is multivalued, so we can not use it to prove
the theorem as it stands. Let us, then, prove that there exists a
polynomial function $f$, such that $[f(X)]^n=X$.

Let $\left\{\lambda_i\right\}$ be the eigenvalues of $X$, with
algebraic multiplicities $\left\{m_i\right\}$ (i.e. the size of the
largest Jordan block related to such eigenvalue). Matrix function
theory \cite[Chapter 1]{Higham2008} states that the action of any
analytical function $f$ on a matrix $X$ is completely determined by
the set of values $\left\{f(\lambda_i)\right\}$, along with the
derivatives $\left\{ f^{(k)}(\lambda_i)\right\}$, up to degree
$m_i$. Let us choose, for each $i$ separately, $f(\lambda_i)$ and
$f^{(k)}(\lambda_i)$ from the same branch of the complex $n$-th root
function. It is always possible to find a polynomial $f$ that takes
exactly those values and derivatives at the eigenvalues of $X$, e.g.,
via Hermite interpolation.  Thus, we can define $S\equiv f(X)$, and we
have $S^n=X$, as required.

Combining this result with the corollaries, we get that
\begin{align}
S^{\otimes n} \ket{\psi} = S^{n}_{(1)} \ket{\psi} = X_{(1)} \ket{\psi}.
\end{align}

\end{proof}

The converse of theorem \ref{thm:sssb} is false. Take, e.g.,
$S=\sigma_x$ and $\ket{\psi}=\ket{00}$. It is true that $S \otimes S
\ket{00}=\ket{11}\in \cal{S}$, yet there is no $B$ such that
$B_{(1)}\ket{00}=\ket{11}$.

\subsection{Explicit formula for symmetrization}

In this section we provide the explicit form of $A$, given all $A_i$.
Let $B_{ij} \equiv A_i^{-1} A_j$. Thus, operator $B$ in the previous
section would correspond to $B_{12}$ with the new
notation. Transforming \eqref{eq:a1a2an} we get
\begin{align}
&A_1 \otimes A_2 \otimes \cdots \otimes A_n\ket{\psi}\\
&= A_1 \otimes A_1 B_{12} \otimes \cdots \otimes A_1 B_{1n} \ket{\psi}
\\
&= A_1^{\otimes n} B_{12\ (2)} B_{13\ (3)} \cdots B_{1n\ (n)}\ket{\psi}.
\end{align}

All $B_{1j\ (j)}\ket{\psi}$ are symmetric states, similarly to
$B_{(1)}\ket{\psi}$. Consequently, the last part can be reshuffled as
\begin{align}
A_1^{\otimes n} \left(B_{11}B_{12}\cdots B_{1n} \right)_{(1)} \ket{\psi}.
\end{align}
Note that no requirements are imposed about their commutativity.
Using Lemma \ref{thm:sssb} we get $A = A_1 S$, where $S$ is an
appropriate $n$-th root of $B_{11}B_{12}\cdots B_{1n}$.

Moreover, when all $A_i$ are unitary, then $S$ is unitary, since roots
of unitary matrices can be chosen to be unitary given that $f(U X
U^\dagger) = U f(X) U^\dagger$ for all unitary $U$.  This finalizes
the proof of Theorem~\ref{thm:main}.

\section{Symmetry classes from single-particle stabilizers}
\label{classes}

A well-known strategy in the search for entanglement classes is to
study the dimension of the stabilizers of a state \cite{Cenci2010a},
i.e.: operators $X$ such that $X \ket{\psi} = \ket{\psi}$. In our case
it is natural to consider stabilizers in the form of $X = B_{(1)}
B_{(2)}^{-1}$, and state that {\em $B$ stabilizes
  $\ket{\psi}\in\cal{S}$} as a convenient shorthand
notation. Following Lemma \ref{thm:bbinv2b}, $B$ stabilizes
$\ket{\psi}\in\cal{S}$ if and only if $B_{(1)}\ket{\psi}\in\cal{S}$.
Bear in mind that a set of $B$ stabilizing a particular state is
guaranteed to form a group only for $n\geq3$, as follows from Lemma
\ref{thm:comm3}.

Let us consider the Jordan normal form $J$ of $B$. We have shown that
all local operations for symmetric states are equivalent to the action
of the same single-particle operation on all qudits: $A^{\otimes n}$.
Consequently, if a state is stabilized by $B$, a SLOCC transformed
state is stabilized by some $A B A^{-1}$, i.e.: {\em the Jordan form
  of the stabilizer is preserved}.

Below, we prove the following facts relating the Jordan form of $B$ to
the stabilized states. First, we show that the precise eigenvalues are
not important --- only their degeneracies matter (see the notation
from Table \ref{tab:n3}). We show, second, that stabilized states
never mix eigenspaces of different eigenvalues. In particular, this
means that the problem can be split into a direct sum over distinct
eigenvalues. Third, we show the explicit form of a state stabilized by
a single Jordan block.  Fourth, we show that when eigenvalues are
non-degenerate, there is a unique state related to it (up to SLOCC).
Fifth, we proceed to write down states for multiple Jordan blocks with
the same eigenvalue. This will complete the characterization of states
stabilized by any $B$.

\begin{theorem}
The set of states stabilized by $B$ does not depend on the particular values of
its eigenvalues, as long as (non-)degeneracy is preserved.
\end{theorem}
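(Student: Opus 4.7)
The plan is to exploit Corollary~\ref{corol:analytic}: for any polynomial $p$, the assumption $B_{(1)}\ket{\psi}\in\mathcal{S}$ implies $p(B)_{(1)}\ket{\psi}\in\mathcal{S}$, whence (by Lemma~\ref{thm:bbinv2b}) $p(B)$ stabilizes $\ket{\psi}$ whenever $p(B)$ is invertible. The proof therefore reduces to producing, for any two matrices $B,B'$ in Jordan canonical form with identical block sizes and identical pattern of eigenvalue coincidences---but different eigenvalue sets $\{\lambda_i\}$ and $\{\mu_i\}$---a polynomial $p$ with $p(B)=B'$; a symmetric construction of a $q$ with $q(B')=B$ then forces the two sets of stabilized states to coincide. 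The case of $B,B'$ not already in Jordan form reduces to this one by a common similarity transformation.

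I would construct $p$ by Hermite interpolation, mirroring the technique from the proof of Theorem~\ref{thm:sssb}. For each distinct eigenvalue $\lambda_i$ of $B$, denoting by $m_i$ the size of the largest Jordan block attached to $\lambda_i$, I impose
\begin{equation}
p(\lambda_i)=\mu_i,\quad p'(\lambda_i)=1,\quad p^{(k)}(\lambda_i)=0 \text{ for } 2\le k<m_i.
\end{equation}
Matrix function theory~\cite{Higham2008} then evaluates $p$ on each Jordan block as $p(J_k(\lambda_i))=J_k(\mu_i)$, giving $p(B)=B'$ on the nose. Invertibility of $p(B)$ comes for free, since $B'$ must itself be invertible to qualify as a stabilizer and hence every $\mu_i$ is nonzero.

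The point requiring the most care is that the Hermite data is imposed only once per distinct eigenvalue of $B$, yet must recover $B'$ correctly on \emph{every} Jordan block sharing that eigenvalue, including blocks of different sizes. This is immediate from writing $J_k(\lambda_i)=\lambda_i\mathbb{I}+N$ with $N$ nilpotent and Taylor-expanding $p$ about $\lambda_i$: the derivatives $p^{(k)}(\lambda_i)$ for $0\le k<m_i$ determine $p(J_k(\lambda_i))$ simultaneously across all such blocks, and the prescribed data is engineered precisely so that each of them collapses to the corresponding $J_k(\mu_i)$. Beyond this verification the argument is a direct application of the matrix-function calculus already rehearsed in Theorem~\ref{thm:sssb}, and no further obstacle arises.
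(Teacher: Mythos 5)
Your proof is correct and follows essentially the same route as the paper: remap the eigenvalues by a Hermite-interpolating polynomial applied to $B$ via the matrix functional calculus, so that Corollary~\ref{corol:analytic} guarantees the resulting operator still stabilizes every state that $B$ does. In fact your version is stated more carefully than the paper's: your derivative conditions ($p'(\lambda_i)=1$ and vanishing higher derivatives, which make $p(J_k(\lambda_i))=J_k(\mu_i)$ exactly) correct what appears to be a typo in the paper's condition $f^{(k)}(\lambda_i)=\delta_{0k}$ (which as written would collapse the Jordan blocks), and you make explicit the reverse construction $q(B')=B$ needed to conclude that the two sets of stabilized states actually coincide, a step the paper leaves implicit.
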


\begin{proof}
We will show that mapping eigenvalues to different ones does not break
the stabilizer's condition.  Let us choose a complex function $f(z)$
such that (i) for all eigenvalues $f(\lambda_i) = \tilde{\lambda}_i$,
and (ii) $f^{(k)}(\lambda_i)=\delta_{0k}$ for all $k$ up to the
algebraic multiplicity of each $\lambda_i$. Now, $f(B)_{(1)}$ is also
a stabilizer of $\ket{\psi}$, with the same Jordan blocks, but
arbitrarily set eigenvalues.
\end{proof}

In particular, for $d=2$ the only two non-trivial Jordan forms of $B$
are related to the GHZ state (two different eigenvalues) and the W
state (single eigenvalue). We proceed to show that stabilized states
never mix subspaces with different eigenvalues.

Given a subspace $V$, let us denote by $\mathrm{Sym}^n(V)$ the
permutation-symmetric subspace of $V^{\otimes n}$.  Then we have the
following:

\begin{theorem}
For a given Jordan form $J$ with generalized eigenspaces $V_1, \cdots,
V_p$ for distinct eigenvalues, stabilized states are of the form
\begin{align}
\ket{\psi} \in \bigoplus_{i=1}^p \mathrm{Sym}^{n}(V_i).
\end{align}
That is, they contain no vectors mixing components from Jordan blocks
of different eigenvalues.
\end{theorem}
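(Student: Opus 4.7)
My plan is to use the spectral projectors of $B$ onto its generalized eigenspaces. Let $P_i$ denote the projector onto $V_i$ along $\bigoplus_{j \neq i} V_j$. By standard matrix function theory, invoked already in the proof of Theorem~\ref{thm:sssb}, each $P_i$ is a polynomial in $B$: via Hermite interpolation one constructs $p_i(z)$ with $p_i(\lambda_j)=\delta_{ij}$ and all derivatives up to the algebraic multiplicity $m_j$ vanishing at every $\lambda_j$; then $p_i(B)=P_i$. Since $B$ stabilizes $\ket{\psi}$, the hypothesis of Corollary~\ref{corol:analytic} is met, and so $P_{i\,(1)}\ket{\psi}\in\mathcal{S}$ for every $i$. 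In particular $P_{i\,(1)}\ket{\psi}=P_{i\,(a)}\ket{\psi}$ for every site $a$.

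Next I would show that mixed tensor products of local projectors annihilate $\ket{\psi}$. For $i \neq j$ and two distinct sites $a, b$, operators on different particles commute and $P_{i\,(a)}\ket{\psi}$ is symmetric, so
\begin{align}
P_{j\,(b)}P_{i\,(a)}\ket{\psi} = P_{j\,(b)}P_{i\,(b)}\ket{\psi} = (P_j P_i)_{(b)}\ket{\psi} = 0,
\end{align}
where the last step uses orthogonality $P_j P_i = 0$. For longer tensor products, any factor $P_{k\,(c)}$ with $c \notin \{a,b\}$ commutes with both $P_{i\,(a)}$ and $P_{j\,(b)}$ and can be pulled aside, reducing the general case to the two-site computation above. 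Hence $P_{i_1}\otimes\cdots\otimes P_{i_n}\ket{\psi}=0$ whenever the index string $(i_1,\ldots,i_n)$ is not constant.

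To conclude, I would insert the resolution of the identity $\sum_i P_i = I$ on every site:
\begin{align}
\ket{\psi} = \sum_{i_1,\ldots,i_n} P_{i_1}\otimes\cdots\otimes P_{i_n}\ket{\psi} = \sum_{i=1}^{p} P_i^{\otimes n}\ket{\psi}.
\end{align}
Each summand lies in $V_i^{\otimes n}$; moreover, since $P_i^{\otimes n}$ is the same operator on every site it commutes with all permutations, hence preserves symmetry, and therefore $P_i^{\otimes n}\ket{\psi}\in\mathrm{Sym}^n(V_i)$. This is the required direct-sum decomposition.

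The only genuinely delicate ingredient is the realization of each spectral projector $P_i$ as a polynomial in $B$, since that is what allows us to invoke Corollary~\ref{corol:analytic} and push $P_i$ onto any single site while keeping the result symmetric. Once this is granted, the rest of the argument is bookkeeping built from orthogonality of the $P_i$ and site-wise commutation of disjoint local operators; no step requires $n\geq 3$, so the proof covers the two-particle case as well.
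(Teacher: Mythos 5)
Your proof is correct, and it reaches the conclusion by a genuinely different route than the paper. Both arguments rest on the same key fact --- the spectral projector $P_i$ onto a generalized eigenspace is a polynomial in $B$, so by Corollary~\ref{corol:analytic} it can be applied to a single site while preserving symmetry --- but the executions differ. The paper uses only \emph{one} such projector and argues by contradiction: it picks a hypothetical mixed component $\ket{\mu}\ket{\nu}\ket{\xi}+\ket{\nu}\ket{\mu}\ket{\xi}$, applies $f(B)_{(1)}$, and observes that the surviving term cannot be matched by anything else to restore symmetry. You instead use \emph{all} the projectors, kill every non-constant product $P_{i_1}\otimes\cdots\otimes P_{i_n}\ket{\psi}$ via the orthogonality $P_jP_i=0$ combined with the site-relocation identity $P_{i\,(a)}\ket{\psi}=P_{i\,(b)}\ket{\psi}$ (which is exactly Lemma~\ref{thm:bbinv2b} in disguise), and then insert the resolution of the identity to obtain $\ket{\psi}=\sum_i P_i^{\otimes n}\ket{\psi}$ directly. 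What your version buys is a constructive decomposition rather than a nonexistence argument: it avoids the paper's somewhat informal claim that the leftover term ``cannot be paired with any other terms,'' which implicitly requires tracking all components of $\ket{\psi}$ at once, and it makes explicit that the argument needs only $n\geq 2$. The paper's version is shorter and pinpoints the concrete obstruction to mixing eigenspaces. Your only load-bearing ingredient beyond the paper's toolkit is the Hermite-interpolation realization of $P_i$ as $p_i(B)$, which is the same device the paper already uses, so nothing new needs to be justified.
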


\begin{proof}
Let $\ket{\mu}$ and $\ket{\nu}$ be one-particle states
($\mu,\nu\in\{0,\cdots,d-1\}$) that belong to blocks of $J$ with
different eigenvalues.  Let us take $f(B)$ mapping all subspaces to
zero, except the one to which $\ket{\mu}$ belongs, which we map to 1.
Suppose that $\ket{\psi}$ has a component containing a product of
$\ket{\mu}$ and $\ket{\nu}$ (at different sites). Then, in particular,
it has $\ket{\mu}\ket{\nu}\ket{\xi}$ and
$\ket{\nu}\ket{\mu}\ket{\xi}$, for some symmetric $\xi$ (perhaps
containing $\ket{\mu}$ or $\ket{\nu}$ as well).

But
\begin{align}
&f(B)_{(1)} \left(\ket{\mu}\ket{\nu}\ket{\xi} +
\ket{\nu}\ket{\mu}\ket{\xi}\right)\\
&=  \ket{\mu}\ket{\nu}\ket{\xi}.
\end{align}
The right hand side cannot be paired with any other terms in order to
make a symmetric state. So, $f(B)_{(1)} \ket{\psi}$ is not symmetric,
which contradicts the assumption. Thus, a stabilized state can not
contain a term with a product of elements from two Jordan subspaces
with different eigenvalues.
\end{proof}

Thus, when $J$ has $d$ distinct eigenvalues, the stabilized state is a
generalized GHZ state:
\begin{equation}
\ket{\psi} = \alpha_0 \ket{0}^n + \cdots + \alpha_{d-1} \ket{d-1}^n.
\end{equation}

When we consider local unitary equivalence, then the set of
$\{|\alpha_i|^2 \}_{i\in \{0,\cdots,d-1\}}$ distinguishes classes,
whereas for SLOCC, the state is equivalent to any other with the same
number of non-zero $\alpha_i$.

Now, it suffices to focus on a {\em Jordan subspace} related to a
single eigenvalue.  Still, for a single eigenvalue there may be more
than one Jordan blocks, i.e.  invariant subspaces. We start with the
analysis of a {\em single Jordan block}, and then generalize our
result to more blocks with the same eigenvalue.

\begin{theorem}
Let $K$ be a $k \times k$ Jordan block with eigenvalue zero, i.e.
$\sum_{i=1}^{k-1}\ket{i-1}\bra{i}$. Its stabilized states are
\begin{equation}
\ket{\psi} = \sum_{j=0}^{k-1} \alpha_j \ket{E_j},\label{eq:ejsum}
\end{equation}
where $\ket{E_j}$ is a symmetric state with $j$ {\em excitations},
i.e.  a symmetrized sum of all basis states for which the sum of the
particle indices is $j$:
\begin{align}
\ket{E_j} = \sum_{i_1+\cdots+i_n = j} \ket{i_1} \ket{i_2}\cdots \ket{i_n}. 
\end{align}
\end{theorem}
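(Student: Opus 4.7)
The plan is to prove both inclusions. For the easy (sufficient) direction, I would compute by direct index manipulation that $K_{(1)} \ket{E_j} = \ket{E_{j-1}}$ whenever $0 \leq j \leq k-1$, setting $\ket{E_{-1}} := 0$. Expanding the definition of $\ket{E_j}$, letting $K$ act on the first tensor slot so $\ket{i_1}$ becomes $\ket{i_1 - 1}$ for $i_1 \geq 1$ (and $0$ for $i_1 = 0$), and relabelling $i_1' = i_1 - 1$ gives a sum over tuples summing to $j - 1$ with $i_1' \leq k - 2$. The only subtlety is that this last bound is vacuous in the allowed range: since $j - 1 < k - 1$, no nonnegative tuple summing to $j - 1$ has any coordinate equal to $k - 1$, so the sum agrees with the unrestricted one defining $\ket{E_{j-1}} \in \mathcal{S}$. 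Linearity then shows every $\sum_{j=0}^{k-1} \alpha_j \ket{E_j}$ satisfies the stabiliser condition $K_{(1)} \ket{\psi} \in \mathcal{S}$ from Lemma~\ref{thm:bbinv2b}.

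For the converse direction, I would decompose the symmetric subspace $\mathcal{S}$ by the total-excitation grading. Writing $\ket{\psi} = \sum_j \ket{\psi_j}$ with $\ket{\psi_j}$ supported on tuples having $\sum_m i_m = j$, the operator $K_{(1)}$ lowers this grading by one and preserves symmetry level-wise, so $K_{(1)}\ket{\psi} \in \mathcal{S}$ is equivalent to $K_{(1)}\ket{\psi_j} \in \mathcal{S}$ for every $j$ separately. This reduces the task to classifying symmetric states of fixed total excitation $j$ whose image under $K_{(1)}$ is again symmetric.

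Expanding $\ket{\psi_j} = \sum_{\vec{i}} c_{\vec{i}} \ket{i_1 \cdots i_n}$ with $c_{\vec{i}}$ symmetric in its arguments, reading off the coefficients of $K_{(1)}\ket{\psi_j}$ in the product basis, and demanding symmetry under the swap $1 \leftrightarrow m$ produces two necessary conditions on $c$: (A) $c(\vec{i}) = c(\vec{i}')$ whenever $\vec{i}'$ is obtained from $\vec{i}$ by lowering some coordinate that is $\geq 1$ and raising another coordinate that is $\leq k - 2$ by one unit; and (B) $c(\vec{i}) = 0$ whenever some coordinate of $\vec{i}$ equals $k - 1$ while another is $\geq 1$. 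The main bookkeeping challenge is connectivity: I would argue, by iteratively funnelling units toward a canonical tuple such as $(j, 0, \ldots, 0)$ or its nearest admissible approximation, that all valid tuples at any fixed level $j$ sit in a single connected component under the moves of (A), so $c$ is forced to be a single constant on each level. For $j \leq k - 1$ no tuple can trigger (B), leaving the free constant $\alpha_j$ and giving $\ket{\psi_j} = \alpha_j \ket{E_j}$. For $j \geq k$ one can exhibit a level-$j$ tuple with one coordinate at $k - 1$ and the remaining $j - (k - 1) \geq 1$ units distributed elsewhere, which supplies a zero witness for (B) inside the component, so (A) propagates $c \equiv 0$ and $\ket{\psi_j} = 0$. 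Reassembling the surviving levels yields the form in \eqref{eq:ejsum}.
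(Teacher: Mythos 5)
Your proposal is correct, and the two halves fare differently against the paper. The sufficiency direction is essentially identical: the same index shift showing $K_{(1)}\ket{E_j}=\ket{E_{j-1}}$ for $j<k$, with the same observation that the induced bound $i_1'\le k-2$ is vacuous at those levels. For the converse the paper takes a genuinely different route: it fixes the excitation level $j$ exactly as you do, but then inducts on the number of particles, writing the level-$j$ component as $\sum_{l}\beta_l\ket{E_{j-l}}\ket{l}$ (inductive hypothesis applied to the first $n$ particles) and comparing $K_{(1)}$ with $K_{(n+1)}$ to force $\beta_l=\beta_{l+1}$. You instead stay at fixed $n$ and run a coefficient argument: your conditions (A) and (B) are precisely what the transpositions $1\leftrightarrow m$ yield, and the funnelling step does close --- at level $j\le k-1$ any two nonzero coordinates $i_a\le i_b$ satisfy $i_b\le k-2$ (since $i_a+i_b\le j\le k-1$ with $i_a\ge 1$), so units can always be merged toward $(j,0,\ldots,0)$, while for $j\ge k$ the same moves reach a tuple with a coordinate equal to $k-1$ and another nonzero, supplying the zero witness. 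Your version costs more combinatorial bookkeeping but is self-contained at fixed $n$ and makes explicit why the levels $j\ge k$ vanish; the paper's induction is shorter but handles those levels only implicitly, through the inductive hypothesis restricting the first-$n$-particle factor to $\ket{E_{j-l}}$ with $j-l\le k-1$, and leans on correctly extracting the $n$-particle stabilized structure from the $(n+1)$-particle state.
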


\begin{proof}
First, let us show that all states $K_{(1)} \ket{E_j}$ are symmetric, as long as
$j < k$.

\begin{align}
K_{(1)} \ket{E_j} &= \sum_{i_1+\cdots+i_n = j} \ket{i_1 - 1} \ket{i_2}\cdots
\ket{i_n}\\
&= \sum_{i'_1+\cdots+i_n = j - 1} \ket{i'_1} \ket{i_2}\cdots \ket{i_n} =
\ket{E_{j-1}},
\end{align}
where we use $\ket{-1} \equiv 0$ as a convenient shorthand
notation. Now we can apply a substitution $i'_1 = i_1 - 1$ and change
the summation limit (thus requiring $j < k$).

Let us now show that all stabilized states $\ket{\psi}$ have the form
of \eqref{eq:ejsum}.  We proceed by induction with respect to $n$, the
number of particles. For $n=1$ (inductive basis), all basis states are
stabilized. Now, let us assume that the condition works up to a given
$n$. As $K_{(1)}$ reduces the total number of excitations by 1, it
suffices to look at subspaces of fixed $j$.  Together with the
inductive assumption (in particular, the fact that the first $n$
particles must remain in a permutation symmetric state after
application of $K_{(1)}$) we get a general form
\begin{align}
\ket{\xi} = \sum_{l=0}^{j} \beta_l \ket{E_{j-l}} \ket{l}.
\end{align}

To find the actual constraints on $\beta_l$, we just note that the
assumed symmetry of $K_{(1)}\ket{\xi}$ implies that
\begin{align}
 K_{(1)}\ket{\xi} &= \sum_{l=0}^{j-1} \beta_l \ket{E_{j-l-1}} \ket{l}\\
  = K_{(n+1)}\ket{\xi} &= \sum_{l=1}^{j} \beta_l \ket{E_{j-l}} \ket{j-1}.
\end{align} 
Again, with a simple shift of index, and using the orthogonality of
the components, we get $\beta_l = \beta_{l+1}$. Thus, there is only
one state (up to a factor) for a given $j$ that remains symmetric
after $K_{(1)}$.

\end{proof}

When considering SLOCC-equivalence, we may take $\ket{E_{k-1}}$ as a
representative of the states stabilized by $K_{(1)}$. The reason is
that all other states (with $\alpha_{k-1}\neq0$) can be built from it
via an operator $\sum_{j=0}^{k-1} \alpha_{k-1-j} K_{(1)}^j$. This
operator is invertible, since its determinant is $\alpha_{k-1}^k$.

Throughout the derivation, we work with unnormalized states for
convenience.  The properly normalized excitation state is given in
Eq. \eqref{eq:excitationnormalized}.

\begin{theorem}
There is a unique (up to SLOCC operations) state stabilized by $B$ if
and only if each Jordan block of $B$ has a distinct eigenvalue.

An $(n>2)$-particle state $\ket{\psi}\in\cal{S}$, stabilized by $B$,
is unique (up to SLOCC) if and only if each block of its Jordan form
has a distinct eigenvalue and no other $B'$ exists with a greater
number of eigenvalues or a lesser number of Jordan blocks.
\end{theorem}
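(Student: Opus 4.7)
The plan is to combine three earlier results: the direct-sum decomposition of stabilized states over distinct eigenvalues, the single-Jordan-block characterization via excitation states, and the symmetric $n$-th root construction of Theorem~\ref{thm:sssb}.

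For the ``if'' direction of the first statement, suppose each Jordan block of $B$ has a distinct eigenvalue. Write $B$ in Jordan form with blocks $J_i = \lambda_i I + K_i$ (with $K_i$ nilpotent) of sizes $k_i$, and let $V_i$ denote the $i$-th generalized eigenspace. By the direct-sum theorem combined with the single-block characterization, every stabilized state takes the form
\begin{equation*}
\ket{\psi} = \sum_{i=1}^{p}\sum_{j=0}^{k_i-1}\alpha_j^{(i)}\ket{E_j^{(i)}},
\end{equation*}
where $\ket{E_j^{(i)}}$ is the $j$-excitation state built inside $V_i$. I would take as canonical representative $\ket{\psi_0}=\sum_i\ket{E_{k_i-1}^{(i)}}$ and restrict to the generic stratum $\alpha_{k_i-1}^{(i)}\neq 0$ for all $i$. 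Define, for each $i$, the polynomial $Q_i = \sum_{j=0}^{k_i-1}\alpha_{k_i-1-j}^{(i)}K_i^j$, which is invertible (determinant $(\alpha_{k_i-1}^{(i)})^{k_i}$) and satisfies $(Q_i)_{(1)}\ket{E_{k_i-1}^{(i)}} = \sum_{j}\alpha_j^{(i)}\ket{E_j^{(i)}}$ by the single-block argument just preceding the theorem. Because the eigenvalues $\lambda_i$ are distinct, Hermite interpolation (the trick used in the proof of Theorem~\ref{thm:sssb}) yields a polynomial $g$ with $Q \equiv g(B) = \bigoplus_i Q_i$. Then $Q_{(1)}\ket{\psi_0} = \ket{\psi}$ is symmetric, so Theorem~\ref{thm:sssb} provides an invertible $n$-th root $S$ of $Q$ with $S^{\otimes n}\ket{\psi_0} = \ket{\psi}$, establishing SLOCC-equivalence.

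For the ``only if'' direction of the first statement, suppose two Jordan blocks of $B$ share an eigenvalue $\lambda$. I would exhibit SLOCC-inequivalent stabilized states. The cleanest example is two $1\times 1$ blocks with common eigenvalue: $B$ restricts to $\lambda\mathbb{I}$ on a two-dimensional subspace, so every symmetric $n$-qudit state supported there is stabilized, and for $n\geq 3$ the $n$-qubit symmetric subspace already contains a continuum of SLOCC classes (GHZ-like versus W-like, and more). The general multi-block/same-eigenvalue case follows by mixing excitation sums across the blocks, yielding a family whose SLOCC types --- detected by the Jordan type of the minimal stabilizer, an SLOCC invariant as shown earlier --- vary with the mixing parameters.

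The second statement sharpens the first by accounting for the fact that a single state may admit several stabilizers of differing Jordan types. The degenerate stabilized states (those where some $\alpha_{k_i-1}^{(i)}$ vanishes) effectively collapse into a smaller Jordan structure and acquire a strictly finer stabilizer $B'$ having more eigenvalues or fewer blocks. I would close this direction by (a) constructing such a $B'$ explicitly in each degenerate case --- formed from a polynomial in $B$ that splits the offending block into pieces with distinct eigenvalues, exploiting once more Hermite interpolation --- and (b) observing that if no such $B'$ exists, only the generic stratum remains and the preceding construction delivers uniqueness. The hypothesis $n>2$ enters through Lemma~\ref{thm:comm3}: for $n=2$ stabilizers need not form a group and GHZ- and excitation-type classes already coincide ($\ket{00}+\ket{11}+\ket{22}\cong\ket{02}+\ket{11}+\ket{20}$), so Jordan data ceases to be a complete invariant. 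The main obstacle I anticipate is verifying part (a) in full generality: that every degenerate stabilized state genuinely admits a finer stabilizer with the claimed refinement, which requires a careful case analysis matching the vanishing pattern of the $\alpha_j^{(i)}$ to a strict refinement of the Jordan type.
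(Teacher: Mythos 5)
Your overall strategy matches the paper's: the ``if'' direction via excitation-state representatives and the invertible polynomial operator $\sum_j \alpha_{k-1-j}K^j$ (extended across blocks by Hermite interpolation, which is a legitimate and slightly more explicit version of what the paper does), and the ``only if'' direction by exhibiting SLOCC-inequivalent states inside $\mathrm{Sym}^n$ of a two-dimensional subspace on which $B$ acts with a repeated eigenvalue --- this is exactly the paper's $\ket{\mu},\ket{\nu}$ argument.

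There is, however, one step that would fail as written: your part (a), where you propose to obtain the finer stabilizer $B'$ of a degenerate state as ``a polynomial in $B$ that splits the offending block into pieces with distinct eigenvalues.'' No polynomial (or analytic function) of $B$ can do this. A matrix function $f$ maps each Jordan block with eigenvalue $\lambda$ to a matrix whose sole eigenvalue is $f(\lambda)$; it preserves generalized eigenspaces and can only \emph{merge} eigenvalues or \emph{shrink} blocks (when derivatives vanish), never create new distinct eigenvalues within a block. The finer $B'$ required by the theorem is genuinely a different operator, not a function of $B$, and it is constructed from the \emph{support} of the degenerate state rather than from $B$ itself. Concretely: if the top-excitation amplitude $\alpha_{k-1}$ vanishes, the state involves only $\ket{0},\dots,\ket{k-2}$ of that block, so one may keep the $(k-1)$-dimensional Jordan block on their span and assign the unused vector $\ket{k-1}$ a fresh eigenvalue (one more eigenvalue); if a GHZ amplitude $\alpha_i$ vanishes, the unused eigenvector $\ket{i}$ can be absorbed into a neighboring $1\times 1$ block by adding a rank-one nilpotent $\ket{j}\bra{i}$, producing a $2\times 2$ block (one fewer block). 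Both $B'$ annihilate or fix the state because $\ket{i}$ (resp.\ $\ket{k-1}$) never occurs in it. You correctly flagged (a) as the weak point; the fix is to abandon the polynomial-in-$B$ mechanism entirely, since it is structurally incapable of producing the required refinement.
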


The formulation may seem complicated, but we want to exclude {\em
  degenerate} states which are also stabilized by other operators.
For the excitation state we want to ensure that the amplitude of the
$(d-1)$ excitations is non-zero (otherwise it is stabilized also by a
matrix with two eigenvalues), or, for the GHZ states, that all
amplitudes are non-zero (otherwise, two eigenvalues can be merged into
one, forming a single Jordan block).  For example, a three qutrit pure
state $\ket{000}+\ket{111}$ is stabilized by a matrix with its Jordan
block structure $\{ \{ 1 \}, \{ 1 \}, \{ 1 \} \}$ (as for the GHZ
state). However, unlike $\ket{000}+\ket{111}+\ket{222}$ (the GHZ
state), it is also stabilized by a matrix with one fewer Jordan block,
$\{ \{ 1 \}, \{ 2 \} \}$.

\begin{proof}
``$\Leftarrow$''

We have already shown that the GHZ-like state with all amplitudes
different from zero is unique, as well as the excitation state with
non-zero amplitude for the highest excitation. It follows as well for
any state without blocks of the same eigenvalue, as the problem can be
split into a problem for each eigenvalue.

If any amplitude is zero in the GHZ-like case, the state is also
stabilized by a $B$ with a Jordan block of dimension 2.

If the amplitude for the highest excitation is zero, in the excitation
state, the state is also stabilized with a $B$ with one more
eigenvalue.

``$\Rightarrow$''

If there are two blocks with the same eigenvalue, then we can take two
one-particle eigenvectors $\ket{\mu}$ and $\ket{\nu}$ having the same
eigenvalue. Let us look at the projection of $\ket{\psi}$ on the
subspace spanned by $\text{Sym}^n(\text{lin}\{\ket{\mu},
\ket{\nu}\})$. Then, in particular, a linear combination with non-zero
coefficients of elements with zero, one and two $\ket{\nu}$ states
among all other $\ket{\mu}$ does not give rise to more blocks or
eigenvalues, but gives rise to some states which cannot be
interchanged with local operations.

\end{proof}

\begin{corollary}
The number of Jordan block structures with non-degenerate eigenvalues
is the same as the number of integer partitions of
$d$~\cite{oeisA000041}.

A general construction of such a state is
\begin{align}
\bigoplus_{i=1}^{\#\mathrm{blocks}} \ket{E_{k_i}},\label{eq:unique_general}
\end{align}
where $k_i$ is the dimension of the $i$-th Jordan block, in descending
order.  In particular, for GHZ states there are only blocks of size
$k_i=1$, whereas for the excitation state there is only one block,
$k_1=d$.
\end{corollary}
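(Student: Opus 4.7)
The plan is to derive both assertions of the corollary as direct consequences of the three preceding theorems (no-mixing across distinct eigenvalues, single-Jordan-block characterization, uniqueness under non-degeneracy).

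First I would settle the counting statement. Under the hypothesis of non-degenerate eigenvalues, the Jordan form of $B \in \mathrm{GL}(d,\mathbb{C})$ is determined, up to the (irrelevant) choice of eigenvalues, by the unordered multiset of block sizes $\{k_1, \ldots, k_{\#\mathrm{blocks}}\}$. These sizes are positive integers that sum to $d$ (the total dimension of the space on which $B$ acts), so the set of admissible shapes is in bijection with the integer partitions of $d$, yielding the OEIS reference A000041. I would note the two extreme cases explicitly: the partition $d = 1 + 1 + \cdots + 1$ corresponds to the GHZ-type Jordan form $\{\{1\},\{1\},\ldots,\{1\}\}$, while the partition $d = d$ corresponds to the single block $\{\{d\}\}$ of the excitation state.

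Next I would construct the stabilized state. By the no-mixing theorem, any state $\ket{\psi}$ stabilized by $B$ lies in $\bigoplus_{i} \mathrm{Sym}^n(V_i)$, where $V_1, \ldots, V_{\#\mathrm{blocks}}$ are the generalized eigenspaces, each supporting exactly one Jordan block of size $k_i$. Restricted to each $V_i$, $B$ acts (after removing the eigenvalue, which is harmless by the first theorem of the section) as a nilpotent Jordan block of size $k_i$, so the single-Jordan-block theorem identifies the stabilized states inside $\mathrm{Sym}^n(V_i)$ as linear combinations of excitation states $\ket{E_j}$ on $V_i$ with $j < k_i$. Applying the uniqueness theorem block by block, the SLOCC-canonical choice within each $V_i$ is the top excitation state on that block, giving the representative $\bigoplus_{i=1}^{\#\mathrm{blocks}} \ket{E_{k_i}}$ stated in \eqref{eq:unique_general} (with the convention that the excitation count is taken to the maximum allowed by the block's dimension, so that the non-degeneracy condition from the preceding theorem is satisfied).

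The main technical point — rather than an obstacle — is being careful with the indexing: within each Jordan subspace the excitation states must be built from a basis of $V_i$ (not the global basis), and the direct sum in \eqref{eq:unique_general} must be understood as superposition of states supported on disjoint one-particle subspaces, so that the no-mixing condition is automatic. Once that is in place, checking that this state is indeed stabilized reduces to the computation already performed in the single-block theorem applied independently on each $V_i$, and checking that it is the unique such state (up to SLOCC) is exactly what the preceding theorem guarantees under the non-degeneracy hypothesis. The GHZ and excitation-state specializations then fall out by plugging in the two extremal partitions.
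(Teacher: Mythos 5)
Your proposal is correct and follows exactly the route the paper intends: the corollary is stated without an explicit proof because it is a direct assembly of the no-mixing theorem, the single-block characterization, and the uniqueness theorem, which is precisely how you derive it. Your parenthetical handling of the excitation index (reading $\ket{E_{k_i}}$ as the top excitation admissible for a block of dimension $k_i$, i.e.\ level $k_i-1$ in the convention of the single-block theorem) correctly resolves the paper's slight notational mismatch and does not affect the argument.
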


It is also relevant to ask about stabilized states for $B$ whose
Jordan decomposition contains two different blocks with the same
eigenvalue.

%The construction is technical, yet straightforward (see Appendix \ref{app:a}).

Let us use a one-particle basis given by $\ket{i^{(b)}}$, where $i$
denotes the excitation level (i.e. the largest $i$ such that $J^i$
acting on this vector is non-zero) and $b$ the Jordan block to which
it belongs. First, we notice that the sum of the excitations $j$ in a
given state is decreased by $1$ after the action of $K_{(1)}$. Second,
we notice that the excitations can be distributed among all Jordan
subspaces which are {\em big enough} (i.e. all blocks of size strictly
smaller than $j$). Moreover, the distribution among such Jordan
subspaces needs to be permutation-invariant.

\begin{theorem}
An unnormalized state of excitation $j$ distributed among $s$ blocks
(with weights $n_1,n_2,\cdots$ adding up to $n$, related to
distribution of excitations among Jordan blocks) reads

\begin{equation}
\ket{E_j^{n_1,n_2,\cdots}} = \sum_{\vec{b}: \#i =
n_i}\sum_{i_1+\cdots+i_n=j}\ket{i_1^{(b_1)}}\cdots
\ket{i_n^{(b_n)}}.\label{eq:ej_multimode}
\end{equation}

We will show by induction that only states of the form
$\ket{E_j^{n_0,n_1,\cdots}}$ are stabilized by such $J$.

\end{theorem}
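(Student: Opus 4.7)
The plan is to prove the characterization by induction on the number of particles $n$, mirroring the strategy used for the single Jordan block case.

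First I would verify the easy direction: that each proposed state $\ket{E_j^{n_1,n_2,\cdots}}$ is indeed stabilized by $J$. Since $J_{(1)}$ sends $\ket{i^{(b)}}$ to $\ket{(i-1)^{(b)}}$ (with $\ket{-1^{(b)}}\equiv 0$) and leaves the block label $b$ intact, a direct reindexing $i'_1=i_1-1$ on the first tensor factor in \eqref{eq:ej_multimode} shows that $J_{(1)}\ket{E_j^{n_1,n_2,\cdots}}$ is a linear combination of terms of the form $\ket{E_{j-1}^{n'_1,n'_2,\cdots}}$, hence permutation symmetric. By Lemma \ref{thm:bbinv2b} this is equivalent to being stabilized.

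For the hard direction (uniqueness), I would induct on $n$. The base case $n=1$ is trivial since every one-particle basis vector $\ket{i^{(b)}}$ is stabilized. For the inductive step, observe that $J_{(1)}$ shifts total excitation by $-1$, so the stabilizer condition preserves each fixed-excitation sector and it suffices to work at fixed total excitation $j$. Any $(n+1)$-particle stabilized state $\ket{\xi}$ can then be expanded by singling out the last particle:
\begin{align}
\ket{\xi}=\sum_{l,c}\ket{\phi_{l,c}}\otimes\ket{l^{(c)}},
\end{align}
where each $\ket{\phi_{l,c}}$ lies in the $(n)$-particle symmetric subspace at excitation $j-l$ (this uses that $\ket{\xi}$ is symmetric under any permutation of the first $n$ particles, which follows from overall symmetry). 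Moreover, the requirement that $J_{(1)}\ket{\xi}$ is symmetric forces each $\ket{\phi_{l,c}}$ to remain symmetric after application of $J_{(1)}$ to it, so by the inductive hypothesis each $\ket{\phi_{l,c}}$ is a linear combination of the states $\ket{E_{j-l}^{m_1,m_2,\cdots}}$.

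The main step is then to pin down the coefficients. Using Lemma \ref{thm:comm3}, the symmetry constraint can be sharpened to $J_{(1)}\ket{\xi}=J_{(n+1)}\ket{\xi}$; writing out both sides in the basis above and matching coefficients of the independent symmetric $n$-particle vectors times $\ket{l^{(c)}}$ will give recursion relations that force exactly the combinations appearing in \eqref{eq:ej_multimode}. The key combinatorial identity to check is that the recursion equates the amplitude for ``adding one more particle of block $c$ at excitation $l$'' on the last site with the corresponding amplitude on any of the first $n$ sites, forcing the full permutation symmetry over block-label assignments with the prescribed multiplicities $(n_1,n_2,\cdots)$. I expect the main obstacle to be the bookkeeping of this matching argument when several blocks share the same eigenvalue, since excitations and block labels are coupled by the symmetry constraint; the cleanest route is likely to organize terms by the multiset of block labels $(c,b_1,\ldots,b_n)$ appearing in each tensor product and to apply the orthogonality of distinct multiset sectors, reducing the problem within each sector to the same shift argument $\beta_l=\beta_{l+1}$ that closed the single-block case.
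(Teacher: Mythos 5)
Your proposal follows essentially the same route as the paper: induction on $n$, expanding the $(n{+}1)$-particle state over the last particle's excitation level and block label, invoking the inductive hypothesis for the first $n$ particles, and then using $J_{(1)}\ket{\xi}=J_{(n+1)}\ket{\xi}$ together with permutation symmetry of the block labels to force the recursion $\beta_{l,b}=\beta_{l+1,b}$ and the equality of all coefficients. The only addition is your explicit check of the easy direction (that the states \eqref{eq:ej_multimode} are indeed stabilized), which the paper leaves implicit; otherwise the arguments coincide.
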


For example, one excitation $j=1$ among two particles, distributed among two
modes ($n_1=1$, $n_2=1$) reads
\begin{align}
\ket{E_1^{1,1}} &= \ket{0^{(1)}1^{(2)}} + \ket{1^{(1)}0^{(2)}}\\
&+ \ket{0^{(2)}1^{(1)}} + \ket{1^{(2)}0^{(1)}}.
\end{align}

\begin{proof}
The induction basis is for $n=1$
and holds trivially (as it works for all states). So let us assume that
\eqref{eq:ej_multimode} holds for $n$.

For $n+1$ particles, a generic state with fixed $j$ and $n_1,n_2,\cdots$ is
\begin{align}
\sum_{l=0}^j \sum_{b=1} \beta_{l,b}
\ket{E_{j-l}^{n_1-\delta_{b1},n_1-\delta_{b1},\cdots}} \ket{l^{(b)}}.
\end{align}
Applying $J_{(1)}$ and $J_{(n+1)}$ on the state above, we get a
relation $\beta_{l,b}=\beta_{l+1,b}=\beta_b$. Moreover, from the
condition of permutation symmetry for blocks (i.e. components with the
same $(b)$) we get that all $\beta$ need to be the same, so the state
is of the form \eqref{eq:ej_multimode}.
\end{proof}

This finalizes the classification of symmetric states for which
\eqref{eq:bbinv} holds.

\section{Conclusion and further work}\label{s:conclusion}

This paper proves an open conjecture regarding the classification of
pure symmetric states under local operations. We show that the study
of homogeneous operations, i.e.: those where the same single-particle
operator acts on each particle, suffices.

Furthermore, it introduces and analyzes entanglement classification by
checking which one-particle operations preserve permutation
symmetry. In that classification we obtain a sequence of states,
unique up to SLOCC. On one extreme we find the multiparticle GHZ
state, whereas on the other there is a $(d-1)$ excitation state, which
is a natural generalization of the W state resulting from the
classification scheme.

Moreover, some questions are left open:

\begin{itemize}
\item Whether invariance under all local operations (that is, not only
  invertible operations) on symmetric states can be represented as the
  same transformation for each particle.
\item Whether the application of $k$-particle transformations on
  permutation-symmetric states which are reversible by acting on another
  part will give rise to a different entanglement classification.
\end{itemize}

\subsection*{Acknowledgements}

We would like to thank Barbara Kraus and Julio de Vicente for fruitful
discussions and valuable remarks. Furthermore, we would like to thank Federico
Poloni \cite{MO122823} for pointing out \cite{Higham2008}. 

We would like to acknowledge the Spanish MINCIN/MINECO projects
FIS2008-00784 (TOQATA) and FIS2012-33642, EU integrated projects AQUTE
and SIQS and Chist-Era project DIQIP. P.M. and J.R.L. would like to
acknowledge ICMAT for hospitality.

\end{document}